\documentclass[journal]{IEEEtran}
\usepackage[utf8]{inputenc}
\usepackage[T1]{fontenc}
\usepackage{graphicx,amssymb,amsmath,color,textcomp,bm,cite,stfloats,array,booktabs}
\usepackage[hidelinks]{hyperref}
\allowdisplaybreaks[1]

\newtheorem{remark}{\textbf{Remark}}
\newtheorem{proposition}{\textbf{Proposition}}
\newtheorem{corollary}{\textbf{Corollary}}

\begin{document}

\title{ICL-SEC: Iterative Cross-Layer Semantic \\ Error Correction

\thanks{Y. Wang, S. C. Liew, and Y. Du are with the Department of Information Engineering, The Chinese University of Hong Kong, Hong Kong SAR, China (e-mail:\{yrwang, soung, yuydu\}@ie.cuhk.edu.hk).} 
\thanks{Corresponding author: Soung Chang Liew.}

\author{Yirun Wang, Soung Chang Liew, {\it Fellow, IEEE}, and Yuyang Du}
}
\maketitle

\begin{abstract}
Iterative decoding has been central to the success of modern channel coding, where reliability information is repeatedly exchanged across decoding components to approach fundamental performance limits. This paper brings the same principle to semantic error correction by proposing \textit{iterative cross-layer semantic error correction} (ICL-SEC), a framework that closes the loop between physical-layer soft channel decoder and application-layer language-model-empowered semantic decoder. In the proposed framework, a soft-input soft-output channel decoder first produces bit-level posterior probabilities, from which word-level reliabilities are derived. Words deemed reliable are exposed to a masked language model as semantic context, while unreliable words are masked. The language model then produces contextual word likelihoods, which are leveraged to generate extrinsic bit-level priors and fed back to the channel decoder for the next iteration. This iterative refinement progressively expands the set of confidently recovered words. A key contribution is our \textit{Confirm} prior-update rule: once a word is judged reliable, its bits are assigned deterministic priors with probability one in subsequent iterations, making the word fully resolved side information for both the channel decoder and the language model. This successive-confirmation mechanism prevents oscillatory unmask-mask behavior and yields a reliability interpretation consistent across layers. Simulations over text transmission demonstrates that ICL-SEC substantially outperforms both conventional channel decoding and non-iterative CL-SEC. In particular, the proposed Confirm scheme reduces the bit error rate by more than two orders of magnitude relative to non-iterative CL-SEC, while also significantly improving the other five performance metrics. These results show that principled iterative cross-layer semantic feedback can dramatically improve reliable and meaning-preserving communication.
\end{abstract}

\begin{IEEEkeywords}
Cross-layer error correction, iteration, language models, semantic communications.
\end{IEEEkeywords}

\section{Introduction}\label{introduction}

\IEEEPARstart{T}{he} triumphs of iterative decoding in traditional communication systems
demonstrate the transformative power of exchanging reliability
information across decoding components. When applied to
capacity-approaching codes such as turbo codes and low-density
parity-check (LDPC) codes, iterative decoding has enabled performance
close to the Shannon limit \cite{Ryan2009Channel}. Inspired by
these successes, this paper investigates whether a similar iterative
principle can be tailored to semantic error correction \textbf{(SEC)},
particularly to \textbf{cross-layer SEC} methods
\cite{Wang2026CLSEC,Li2026LLMViterbi,Hao2026Semantic,Yue2026Semantic}
that integrate physical-layer channel evidence with application-layer
semantic information. Our results indicate that, with a principled
design, \textbf{iterative cross-layer SEC} can reduce both bit error
rate (BER) and word error rate (WER) by more than \textbf{two orders of
magnitude} relative to non-iterative cross-layer SEC decoding for text transmission.

To motivate the proposed iterative design, we briefly review how SEC has
evolved from post-decoding correction to cross-layer joint decoding.
Although the idea of combining channel decoding with semantic inference
may also be applicable to non-textual sources, such as images, this
paper follows most existing SEC works and focuses on text recovery. In
this setting, SEC has recently emerged as an
application-layer complement to conventional forward error correction
(FEC). In a conventional SEC pipeline, the physical-layer decoder first
makes hard decisions on the received bits, and SEC is then applied as a
post-processing step to correct the resulting corrupted text using
language-model (LM) context \cite{Hao2025Shorta}.

However, this separation is generally suboptimal: once the decoder
output is reduced to hard decisions, much of the physical-layer
reliability information is discarded, leaving SEC without knowledge of
which bits or words are more reliable. In our earlier work
\cite{Wang2026CLSEC}, we proposed
\textbf{CL-SEC}, a \textbf{cross-layer joint decoding} framework that
fuses physical-layer soft information with LM-derived
semantic probabilities, thereby anchoring semantic correction to channel
evidence and markedly improving information recovery over conventional
\textbf{post-decoding SEC}.

Existing cross-layer SEC methods, however, typically perform this fusion
in a one-shot manner: physical-layer soft information is integrated once
with contextual likelihoods produced by LM-driven semantic inference to
form a cross-layer decision on the recovered text \cite{Wang2026CLSEC,Li2026LLMViterbi,Hao2026Semantic,Yue2026Semantic}. The
resulting semantic information is not fed back to refine the channel
decoder, and newly refined channel-decoder outputs are not repeatedly
used to update the semantic priors. This one-shot design leaves the
central promise of iterative cross-layer refinement largely unexploited.

This limitation echoes a classical lesson from channel coding: the power
of turbo and LDPC decoding lies not merely in combining distinct sources
of reliability information, but in exchanging and refining that
information iteratively across decoding components
\cite{Ryan2009Channel}. Motivated by this principle, we propose
\textbf{iterative CL-SEC (ICL-SEC)}, which adapts this idea to semantic
error correction by \textbf{closing the loop} between the physical-layer
\textbf{channel decoder} and \textbf{application-layer LM-driven
semantic decoder.}

\textbf{Overview of the ICL-SEC Framework:} The following steps outline the proposed ICL-SEC framework for information recovery at the receiver. Although the framework can be extended to non-textual content,
this work focuses on text transmission. 

\textbf{Step 1:} The channel decoder computes bit-level soft
information, from which word-level posterior probabilities are derived.
Words whose posterior probabilities exceed a prescribed threshold are
deemed reliable and left \textbf{unmasked}; the remaining words are
\textbf{masked}.

\textbf{Step 2:} The partially masked text sequence, e.g., ``Mary
\textless mask\textgreater{} a little \textless mask\textgreater,'' is
provided to an LM. Conditioned on the unmasked words, the LM produces
\textbf{contextual probability distributions over candidate words} at
each word position. 

\textbf{Step 3:} The LM-derived word probabilities are leveraged to
compute \textbf{extrinsic bit-level priors} and fed back to the channel
decoder. With these updated priors, the channel decoder repeats Step 1,
enabling iterative cross-layer refinement.

Across iterations, the objective is to progressively increase the number
of words recovered with high confidence.

A central design question in this iterative loop is \textbf{how to
update the bit-level priors} of words that have already been deemed
reliable. We consider three strategies. The \textbf{Naive} strategy
updates the priors of both unmasked and masked words at every iteration.
The \textbf{Hold} strategy freezes the current priors of an unmasked
word once it has been deemed reliable, so that only masked words
continue to receive updated priors. The proposed \textbf{Confirm}
strategy goes further: once a word is deemed reliable, the word is confirmed and its corresponding
bits are assigned deterministic priors with probability one in all
subsequent iterations. In this way, a confirmed word is treated as fully
resolved side information rather than as a soft and revisable estimate.

This above progression also reveals why the prior-update rule is central
to ICL-SEC. The naive strategy exhibits unsatisfactory performance and
oscillatory behavior, since words that are unmasked in one iteration may
be perturbed and masked again later. The hold strategy mitigates but
does not eliminate this instability, because held soft priors still do
not make an unmasked word fully reliable from the channel decoder's
perspective. These observations motivate the Confirm strategy, which
achieves the best performance among the three and provides a consistent
interpretation of reliability across the channel decoder and the LM. 
Rather than a pure belief-propagation algorithm as in turbo and LDPC decoding, this \textbf{successive-confirmation
mechanism} is more akin to \textbf{successive interference
cancellation} \textbf{(SIC)} decoding in wireless networks (see Remark~\ref{re:analogy} in
Section~\ref{confirm-clp-update-ultimate-proposed-scheme}).

For a comprehensive assessment, we evaluate ICL-SEC using six
metrics spanning transmission reliability and semantic
fidelity. Experimental results show that all three iterative
variants -- Naive, Hold, and Confirm -- outperform non-iterative CL-SEC
across the metrics, demonstrating the broad benefit of
iterative cross-layer refinement. Among these variants, the Confirm
method performs best, reducing both the BER and WER by more than two
orders of magnitude relative to non-iterative CL-SEC. These results show
that the gains of ICL-SEC come not merely from repeated semantic
feedback, but from a principled prior-update rule that stabilizes the
interaction between channel decoding and semantic decoding.

\section{Preliminaries}\label{preliminaries}

\subsection{System Architecture}\label{system-architecture}

Fig.~\ref{fig:system} shows the block diagram of the overall system. 
At the transmitter, we remove all spaces and punctuation from the
original text message. Metadata specifying the lengths of successive
words is then included in the header \cite{Wang2026CLSEC}. After word recovery at the
receiver, spaces are restored using this metadata, while punctuation is
restored using a context-aware modern LM \cite{Wang2026CLSEC}.

The remaining payload consists of
$N$
words, denoted by
$\mathbf{w}_{ n } ,n\in\mathcal{N}\triangleq \{1,\ldots,N\}$.
Let
$L_{ n }$
be the number of letters in
$\mathbf{w}_{ n }$.
The word-length information
$\{L_{ n } \}_{ n\in\mathcal{N} }$
is embedded in the frame header so that the receiver can determine the
word boundaries.\footnote{The word-length metadata can be efficiently
  compressed, e.g., using the canonical Huffman coding scheme \cite{Wang2026CLSEC}.
  We note that removing spaces and punctuation reduces the payload
  length more than the compressed header increases it, and thus the
  ICL-SEC frame can be shorter than the original full-text frame. The
  saved length can be used to provide stronger FEC protection for the
  header.} Concatenating the words yields the character sequence
$(\mathbf{w}_{ 1 } ,\ldots,\mathbf{w}_{ N } )$,
which is encoded using the ASCII code, yielding the information bit
stream
$\mathbf{v}=(v_{ 1 } ,\ldots,v_{ K } )$.
Thus,
$K_{ n } =8L_{ n }$
is the number of bits associated with word
$\mathbf{w}_{ n }$.

Before channel encoding, an interleaver is applied to the information
bit stream to disperse the bits of each word across the overall source-bit sequence so that the successive source bits presented to the
following channel encoder are not from the same word or from the
neighborhood of the word. In particular, the interleaver here plays a
role analogous to the interleaver in the turbo code. With the
interleaver,
$\mathbf{v}$
is permuted into
$\mathbf{u}\triangleq (u_{ 1 } ,\ldots,u_{ K } )=\pi(\mathbf{v}),$
where
$u_{\pi(k)}=v_k,$ $k\in\mathcal{K}\triangleq \{1,\ldots,K\}$.
We can obtain
$\mathbf{v}$
from
$\mathbf{u}$
via the inverse deinterleaving
$\mathbf{v}=\pi ^ { -1 } (\mathbf{u}).$

In general, the channel code can be any code that is
amenable to soft-input soft-output decoding (e.g., LDPC or convolutional
code decoded using BCJR algorithm \cite{Ryan2009Channel}). As an instant of
ICL-SEC, this work employs the convolutional code, for simple
exposition of the principle rather optimality of
the overall system. Specifically, the bit stream
$\mathbf{u}$
is encoded by a
rate-$R$
convolutional encoder. For each information bit
$u_{ k }$,
the encoder produces
$M\triangleq 1/R$
coded bits, denoted by
$\mathbf{c}_k=(c_{k,1},\ldots,c_{k,M}),$
where
$c_{k,m}\in\mathbb{F}_2\triangleq\{0,1\},$ $ m=1,\ldots,M $. 

The coded bits are binary phase-shift keying (BPSK) modulated as
$\mathbf{x}_k\triangleq(x_{k,1},\ldots,x_{k,M})=2\mathbf{c}_k-\mathbf{1}_M\in\{-1,+1\}^{M},$ $ k\in\mathcal{K} $. 
The received symbols over the additive white Gaussian noise (AWGN)
channel are
$\mathbf{y}_k\triangleq(y_{k,1},\ldots,y_{k,M})=\mathbf{x}_k+\mathbf{n}_k,$
where the noises 
$\mathbf{n}_k\sim\mathcal{N}(\mathbf{0}_M,\sigma^2\mathbf{I}_M)$.
With the normalization of signal amplitude, the signal-to-noise ratio
(SNR) is given by
$\operatorname{SNR}= 1 / \sigma ^ { 2 } $.
For notational convenience, we define
$\mathbf{y}=(\mathbf{y}_{1},\ldots,\mathbf{y}_{K}).$

At the receiver, \textbf{ICL-SEC} is performed to recover the
transmitted word sequence; the detailed decoding procedure is deferred
to Section~\ref{icl-sec}. As illustrated in Fig.~\ref{fig:system}, ICL-SEC forms an iterative loop
between \textbf{BCJR decoding} and \textbf{LM-driven cross-layer prior (CLP)
update}. After each BCJR pass, bit-level posterior probabilities are
used to identify reliable and unreliable words. Reliable words are
retained, whereas unreliable words are masked and passed to the LM,
which produces contextual likelihoods for candidate words. These
contextual likelihoods are then combined with BCJR soft information by
the \textbf{CLP update} module to produce
updated bit-level priors, which are fed back to the BCJR decoder for the
next iteration.

\begin{figure*}[!t]
\centering
\includegraphics[width=0.8\linewidth]{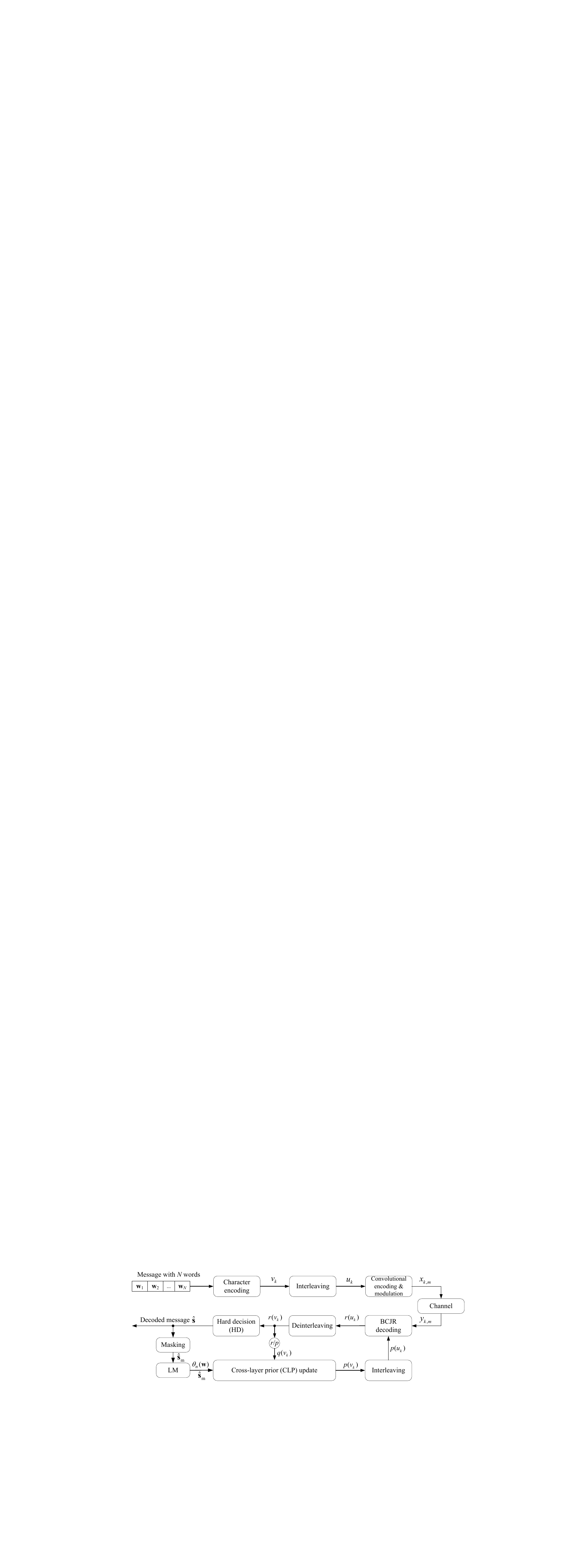}
\caption{Block diagram of the overall system, shown for an instantiation with a convolutional code. The ICL-SEC operates at the receiver for message recovery.}
\label{fig:system}
\end{figure*}

\subsection{BCJR Decoding}\label{bcjr-decoding}

We first review the BCJR decoding to facilitate a precise description of
ICL-SEC in Section~\ref{icl-sec}. The BCJR algorithm computes the \emph{a
posteriori} probability (APP) of each information bit by marginalizing
over all trellis branches associated with that bit
\cite{Ryan2009Channel}. At input bit
$k\in\mathcal{K}$,
let
$s_{ k } \in\mathcal{S}$
be the encoder state, and
$u_{ k } \in\mathbb{F}_{ 2 }$
be the input bit. Denote the constraint length of the convolutional code by
$C$,
and thus
$\lvert\mathcal{S}\rvert=2 ^ { C-1 }$, where $ C-1 $ is the memory length.
For a convolutional code, each trellis branch is specified by the
previous state
$s_{ k-1 } =s'$,
the input bit
$u_{ k } =u$,
and the current state
$s_{ k } =s$;
any two of these variables determine the remaining one. The APP
$\mathbb{P}(u_k=u | \mathbf{y})$
of
$u_{ k }$
can be expressed as
\begin{equation}
\mathbb{P}(u_k=u |\mathbf{y})\propto \!\! \sum_{(s',s)\in\mathcal{S}_u}
\!\!\mathbb{P}
\big(s_{k-1}=s',s_k=s,\mathbf{y}\big),\; u\in\mathbb{F}_2,
\label{eq:1}
\end{equation}
where
$\mathcal{S}_{ u }$
denotes the set of possible state transitions
$(s',s)$
associated with the input bit
$u_{ k } =u$.

Define the forward metric, the branch metric, and the backward metric as
\begin{align}
\alpha_{ k } \left ( { s } \right )&=\mathbb{P}\left ( { s_{ k } =s,\mathbf{y}_{ 1 } ,\ldots,\mathbf{y}_{ k } } \right ),
\label{eq:2}
\\
\gamma_k(s',s)&=\mathbb{P}(s_k=s,\mathbf{y}_k | s_{k-1}=s'),
\label{eq:3}
\\
\beta_k(s)&=\mathbb{P}(\mathbf{y}_{k+1},\ldots,\mathbf{y}_K | s_k=s).
\label{eq:4}
\end{align}
We then have the factorization:
\begin{equation}
\mathbb{P}(s_{k-1}=s',s_k=s,\mathbf{y})=\alpha_{k-1}(s')\gamma_k(s',s)\beta_k(s).
\label{eq:5}
\end{equation}
The forward metric
$\alpha_{ k } (s)$
is computed in a  recursive manner as
\begin{equation}
\alpha_{ k } \left ( { s } \right )=\sum \limits_{ s' } { \gamma_{ k } \left ( { s',s } \right )\alpha_{ k-1 } \left ( { s' } \right ) },
\label{eq:6}
\end{equation}
where the summation is over all possible encoder states
$s'$
that can transition to state
$s$.
Similarly, the backward metric
$\beta_{ k } (s)$
is computed recursively as
\begin{equation}
\beta_{ k-1 } \left ( { s' } \right )=\sum \limits_{ s } { \beta_{ k } \left ( { s } \right )\gamma_{ k } \left ( { s',s } \right ) }.
\label{eq:7}
\end{equation}
Finally, the branch metric
$\gamma_{ k } (s',s)$
can be written as
\begin{equation}
\gamma_k(s',s)=\mathbb{P}(u_k=u)\mathbb{P}(\mathbf{y}_k | u_k=u),\; (s',s)\in\mathcal{S}_u,
\label{eq:8}
\end{equation}
where
$\mathbb{P}\left ( { u_{ k } } \right )$
is the prior probability, and
$\mathbb{P}(\mathbf{y}_{k} | u_k)=\mathbb{P}(\mathbf{y}_{k} | \mathbf{x}_{k})$.
In conventional BCJR, the bit priors are set to
$\mathbb{P}(u_k=0)=\mathbb{P}(u_k=1)=1/2$.

\section{ICL-SEC}\label{icl-sec}

Fig.~\ref{fig:iclsec-detail} provides a detailed view of the ICL-SEC processing, serving as a
zoomed-in representation of the ICL-SEC component in the high-level
system diagram of Fig.~\ref{fig:system}. In Fig.~\ref{fig:iclsec-detail}, the upper part is a factor graph of
the convolutional code; the BCJR decoding as described in Section~\ref{bcjr-decoding} can be derived from this factor graph by applying the
belief-propagation rule. The masked language model (MLM) and CLP updater
at the lower part can be viewed as a \textbf{semantic decoder} separate from the
\textbf{BCJR decoder}.

\begin{figure}[!t]
\centering
\includegraphics[width=1.02\linewidth]{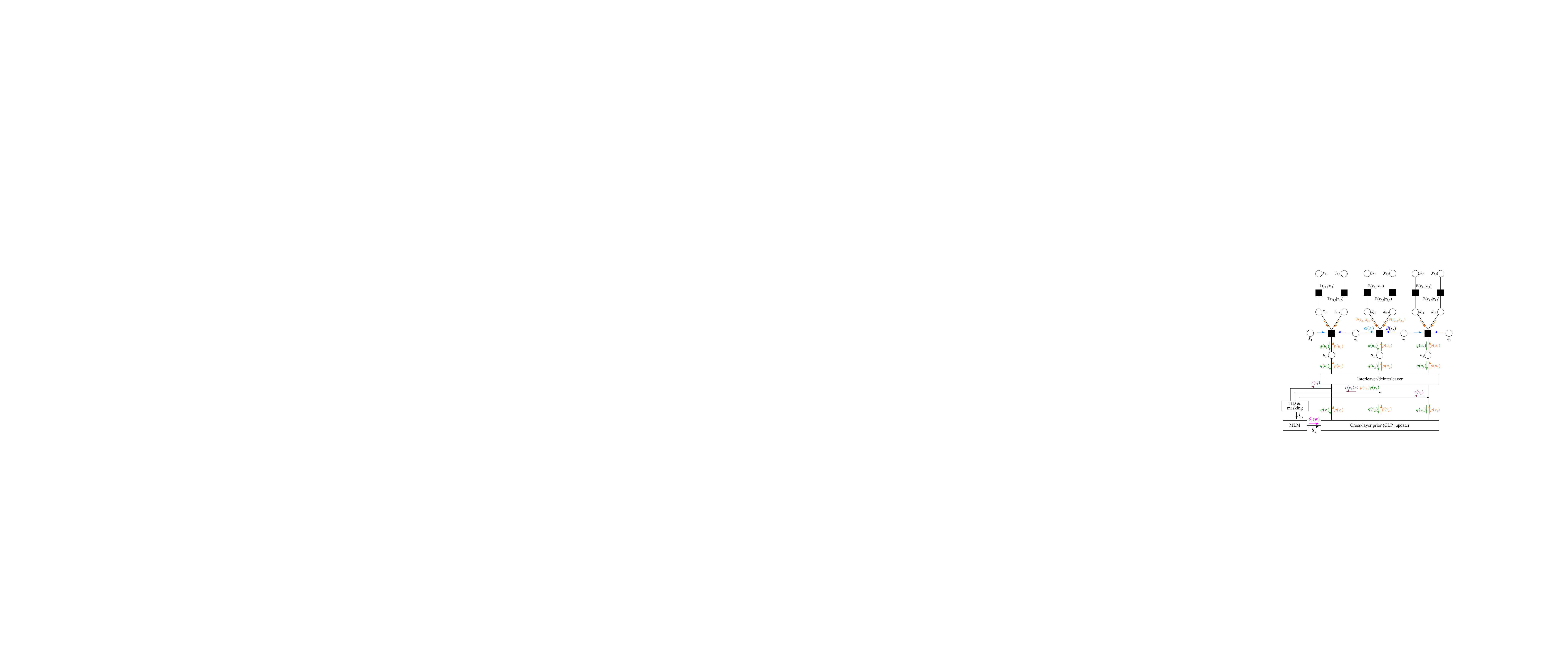}
\caption{ICL-SEC with $R=  1 / 2 $ and $K=3$. The $\gamma_{ k } (s',s)$ term in \eqref{eq:8} becomes $\gamma_k(s',s)=p(u_k)\mathbb{P}(y_{k,1} | x_{k,1})\mathbb{P}(y_{k,2} | x_{k,2}),$ where $(x_{k,1},x_{k,2})\in\{-1,1\}^{2}$ are the BPSK-modulated channel-coded symbols corresponding to the source bit $u_{ k } \in\mathbb{F}_{ 2 }$, and $(y_{k,1},y_{k,2})$ are the received BPSK symbols.}
\label{fig:iclsec-detail}
\end{figure}

\subsection{Overview}\label{overview}
Conventional BCJR assumes that the information bits are equally probable
and thus set
$\mathbb{P}(u_k=0)=\mathbb{P}(u_k=1)=1/2$ in
\eqref{eq:8}. However, text exhibits strong structural characteristics.
Consequently, an LM can exploit syntactic structures, lexical
regularities, and contextual dependencies to infer semantic information
about likely words. The semantic correlations among words uncovered by
LM can be leveraged to provide CLPs to BCJR for refined decoding. A
principle of ICL-SEC is to iteratively refine the values of
$\mathbb{P}(u_{ k } )$
in \eqref{eq:8} in successive BCJR decoding.

Specifically, the generated CLPs are fed into the BCJR decoder, whose
output is subsequently used by the LM-driven semantic decoder to update the CLPs. The refined
priors are then used in the next decoding iteration. The three major
steps, briefed introduced in Section~\ref{introduction}, are outlined below with
reference to Fig.~\ref{fig:iclsec-detail}. Precise details are provided in Part~B, Part~C, and Part~D.

Superscripts $i=1,\ldots,I$ in the following denote the iteration index.

\textbf{Initialization:} Set
$i=1$
and initialize the interleaved-order bit-prior $p^{(i)}(u_k=0)=p^{(i)}(u_k=1)=1/2$
for all
$k\in\mathcal{K}$.

\textbf{Step 1 -- BCJR Decoding and Masking (see Part~B):} Perform
BCJR decoding to produce the complete APP
$r ^ { (i) } (u_{ k } )$,
as given by \eqref{eq:1}. BCJR also generates the soft information
$q ^ { (i) } (u_{ k } )$
that will be fed into the CLP updater. Note that
$r ^ { (i) } (u_{ k } )\propto p ^ { (i) } (u_{ k } )q ^ { (i) } (u_{ k } )$,
where
$q ^ { (i) } (u_{ k } )$
is the intrinsic information and
$p ^ { (i) } (u_{ k } )$
is the extrinsic information from the perspective of the BCJR decoder.
After deinterleaving $\pi^{-1}$, we have
$r ^ { (i) } (v_{ k } )$,
from which word-level posterior probabilities are derived. The
hard-decision (HD) and masking module in Fig.~\ref{fig:iclsec-detail} examines the probability
of the most likely word at each word position. If the probability
exceeds a prescribed threshold
$T$,
the word is left unmasked; otherwise, it is replaced by a mask symbol.

\textbf{Step 2 -- MLM-Driven Contextual Inference (see Part~C):}
The sequence of unmasked words and masks, denoted by
$\hat {\mathbf{s}}_{ \mathrm{m} } ^{ (i) }$
in Fig.~\ref{fig:iclsec-detail}, is provided to a masked language model (MLM). Conditioned on
the visible words and their surrounding context, the MLM produces a
probability distribution over candidate words for each word position
\emph{n}, denoted by
$\theta_{ n } ^{ (i) }(\mathbf w)$
in Fig.~\ref{fig:iclsec-detail}.

\textbf{Step 3 -- CLP Update (see Part~D):} As shown in Fig.~\ref{fig:iclsec-detail},
$\hat {\mathbf{s}}_{ \mathrm{m} } ^{ (i) }$,
$\theta_{ n } ^{ (i) }(\mathbf w)$,
and
$q ^ { (i) } (v_{ k } )$
are provided as inputs to the CLP updater. Based on
$( \hat {\mathbf{s}}_{ \mathrm{m} } ^{ (i) },\theta_{ n } ^{ (i) }(\mathbf w),q ^ { (i) } (v_{ k } ))$,
the CLP updater computes the updated bit-level priors, denoted by
$p ^ { (i+1) } (v_{ k } )$.
The updated priors are fed back to the BCJR decoder as extrinsic
information in its next-iteration decoding.

The above procedure is repeated until the maximum iteration count is
reached or an early stopping rule is met. Specifically, early stopping
is triggered when no unconfirmed words remain after the HD-and-masking
step, or the change in the priors relative to the previous iteration
becomes sufficiently small. After the final iteration, the decoded
message
$\hat {\mathbf{s}}$
is constructed using the decoded words
$\{ \hat {\mathbf{w}}_{ n } ^{ (i) }\}_{ n\in\mathcal{N} }$
as the final output. Note that, after the final iteration, the
most-likely words at each position are left unmasked whether its
probability exceeds the threshold
$T$
or not, and there are no masks in the final sequence
$\hat {\mathbf{s}}$.

In the following, we elaborate on the procedure in
Fig.~\ref{fig:iclsec-detail}. Unless otherwise stated, we describe the bits
involved in the word-level operations, such as candidate construction,
HD, masking, and CLP update, in the natural bit-order
$v_{ k }$.
The BCJR decoder operates in the interleaved order
$u_{ k }$.
After each BCJR pass, $r ^ { (i) } (u_{ k } )$ and $q ^ { (i) } (u_{ k } )$ are deinterleaved to obtain
$r ^ { (i) } (v_{ k } )$
and
$q ^ { (i) } (v_{ k } )$.
After the CLP update,
$p ^ { (i+1) } (v_{ k } )$ is
interleaved back to
$p ^ { (i+1) } (u_{ k } )$
for the next BCJR pass.

\subsection{BCJR-Intrinsic Information, Hard Decision, and Masking}\label{bcjr-intrinsic-information-hard-decision-and-masking}

This part elaborates on the BCJR decoding, word-wise HD, and
masking mechanism in \textbf{Step 1} of Part~A. 
To ensure probabilities in the distribution summing to one, we define a normalization operator $\mathsf{N}_{ \mathbf{a},\mathcal{A} }$:
\begin{equation}
\mathsf{N}_{ \mathbf{a},\mathcal{A} } \left ( { f\left ( { \mathbf{a} } \right ) } \right )=\frac { f\left ( { \mathbf{a} } \right ) } { \sum \limits_{ \mathbf{b}\in\mathcal{A} } { f\left ( { \mathbf{b} } \right ) } }{ \rm{ , } }\;a\in\mathcal{A}.
\label{eq:10}
\end{equation}

\textbf{BCJR-Intrinsic Information:}
As shown in Fig.~\ref{fig:iclsec-detail}, given the input prior
$p ^ { (i) } (v_{ k } )$,
the BCJR decoder produces the complete BCJR APP
$r ^ { (i) } (v_{ k } )$,
given by
\begin{equation}
r ^ { (i) } \left ( { v_{ k } =v } \right )=\mathsf{N}_{ v,\mathbb{F}_{ 2 } } \Big ( { p ^ { (i) } \left ( { v_{ k } =v } \right )\cdot q ^ { (i) } \left ( { v_{ k } =v } \right ) } \Big ),\; v\in\mathbb{F}_{ 2 }.
\label{eq:9}
\end{equation}
The intrinsic information
$q ^ { (i) } (v_{ k } )$
(from the BCJR decoder's perspective) is output to the CLP updater as one
of its inputs.

\textbf{Word-Wise HD:}
The complete probabilities
$r ^ { (i) } (v_{ k } )$
are used to decide the hard word. Denote the complete vocabulary by
$\mathcal{W}$.
For each word
$\mathbf{w}_{ n } ,n\in\mathcal{N}$,
we construct a candidate set
$\mathcal{C}_{ n } \subset\mathcal{W}$
where each of the words has
$L_{ n }$
letters. Let
$\mathcal{B}_n=\{\tilde{K}_n+1,\ldots,\tilde{K}_{n+1}\}$
be the set of natural-order bit indices covered by word
$n$,
where
$\tilde{K}_n=\sum_{\ell=1}^{n-1}K_\ell$
with
$K_{ l } =8L_{ l }$.
For a candidate
$\mathbf{w}\in\mathcal{C}_{ n }$
at the word position
$n$,
let
$b_{ k } (\mathbf{w})\in\mathbb{F}_{ 2 }$
denote its ASCII bit at the natural-order bit-position
$k\in\mathcal{B}_{ n }$.

Note that a sequence of bits may or may not be a valid word
$\mathbf{w}\in\mathcal{C}_{ n }$.
To ensure that the probabilities of candidate words sum to one, we need
to normalize the raw bit-sequence probabilities of a word. 
For each word position
$n\in\mathcal{N}$,
we compute the probability of each valid-word candidate
$\mathbf{w}\in\mathcal{C}_{ n }$
from
$r ^ { (i) } (v_{ k } )$
by
\begin{equation}
\mu_n^{(i)}(\mathbf{w})=\mathsf{N}_{\mathbf{w},\mathcal{C}_n}\!\left(\prod_{\ell\in\mathcal{B}_n}r^{(i)}\big(v_\ell=b_\ell(\mathbf{w})\big)\right),\; \mathbf{w}\in\mathcal{C}_n.
\label{eq:11}
\end{equation}
Then we determine the most probable candidate and its confidence:
\begin{equation}
\hat {\mathbf{w}}_{ n } ^{ (i) }=\operatorname*{arg\,max} \limits_{ \mathbf{w}\in\mathcal{C}_{ n } } \, \mu_{ n } ^{ (i) }\left ( { \mathbf{w} } \right ), \;\; \Xi_{ n } ^{ (i) }=\max \limits_{ \mathbf{w}\in\mathcal{C}_{ n } } \mu_{ n } ^{ (i) }(\mathbf{w}).
\label{eq:12}
\end{equation}
Mapping
$( \hat {\mathbf{w}}_{ n } ^{ (i) })_{ n\in\mathcal{N} }$
back to bits yields the HD bit stream
$( \hat v_{ k } ^{ (i) })_{ k\in\mathcal{K} }$.

\textbf{Masking:}
For each
$n\in\mathcal{N}$,
if
$\Xi_{ n } ^{ (i) }\ge T$
for a threshold
$T\in[0,1]$,
then
$\hat {{\mathbf{w}}}_{ n } ^{ (i) }$
is left unmasked; otherwise,
$\hat {\mathbf{w}}_{ n } ^{ (i) }$
is replaced by a mask symbol (e.g., \textless mask\textgreater).

We define
$\mathcal{U} ^ { (i) } \subseteq \mathcal{N}$
as the set of word positions that are unmasked at iteration
$i=1,\ldots,I$
with the initialization
$\mathcal{U} ^ { (0) } =\emptyset$,
and define its complement
$\overline{\mathcal{U}}^{(i)}=\mathcal{N}\setminus\mathcal{U}^{(i)}$.
With these definitions, we form
\begin{equation}
\tilde{\mathbf{w}}_n^{(i)}=\begin{cases}\hat{\mathbf{w}}_n^{(i)},&n\in\mathcal{U}^{(i)},\\ \mathrm{\textless\!\! mask\!\!\textgreater},&n\in\overline{\mathcal{U}}^{(i)}.\end{cases}
\label{eq:13}
\end{equation}
Inserting spaces between adjacent words in
$\{\tilde{\mathbf{w}}_n^{(i)}\}_{n\in\mathcal{N}}$
yields the sequence of unmasked words and masks, denoted by
$\hat {\mathbf{s}}_{ \mathrm{m} } ^{ (i) }$.
An example of
$\hat {\mathbf{s}}_{ \mathrm{m} } ^{ (i) }$
is ``There \textless mask\textgreater{} a beach with palm
\textless mask\textgreater{} and clear blue water.''

\begin{remark}
The BCJR-intrinsic information
$q ^ { (i) } (v_{ k } )$
is output to the CLP updater as one
of its inputs. Note that, if the MLM and CLP updater in Fig.~\ref{fig:iclsec-detail} are
viewed as a semantic decoder separate from the BCJR decoder, then
$q ^ { (i) } (v_{ k } )$
and
$\hat {\mathbf{s}}_{ \mathrm{m} } ^{ (i) }$
can be interpreted as extrinsic information from the semantic decoder's
perspective. In particular, while
$q ^ { (i) } (v_{ k } )$
is soft information,
$\hat {\mathbf{s}}_{ \mathrm{m} } ^{ (i) }$
is hard information: a mask indicates no extrinsic information is given
at the associated word position, and an unmasked word conveys the
intention to fix the word without further changes (this is the case at least for the proposed Confirm scheme in Part~D). In that light, the threshold $T$ should be chosen close to
one when targeting a very low word error rate.

\end{remark}

\subsection{MLM-Driven Contextual Inference}\label{mlm-driven-contextual-inference}
This part elaborates on the MLM-driven contextual inference in
\textbf{Step 2} of Part~A. With reference to Fig.~\ref{fig:iclsec-detail}, the MLM (e.g.,
mmBERT \cite{Marone2025mmBERT}) takes the sequence of decoded words
and masks 
$\hat {\mathbf{s}}_{ \mathrm{m} } ^{ (i) }$,
as input, and produces a probability distribution over candidate words at
each word position. We restrict the MLM output to the length-compatible
candidate set
$\mathcal{C}_{ n } \in\mathcal{W}$.
After normalization over
$\mathcal{C}_{ n }$,
this gives the semantic word likelihood for each word $n\in\mathcal{N}$:
\begin{equation}
\theta_{ n } ^{ (i) }(\mathbf{w})=\mathsf{N}_{ \mathbf{w},\mathcal{C}_{ n } } \!\Big ( { \mathbb{P}\big(\mathbf{w} | \hat {\mathbf{s}}_{ \mathrm{m} } ^{ (i) },\mathrm{MLM}\big) } \Big ),\; \mathbf{w}\in\mathcal{C}_{ n } ,
\label{eq:14}
\end{equation}
where ``MLM'' in \eqref{eq:14} represents the adoption of a
specific instance of MLM. Both
$\hat {\mathbf{s}}_{ \mathrm{m} } ^{ (i) }$
and
$\theta_{ n } ^{ (i) }(\mathbf{w})$
are forwarded to the CLP Updater as its inputs.

\subsection{Confirm CLP Update (Ultimate Proposed Scheme)}\label{confirm-clp-update-ultimate-proposed-scheme}

This part elaborates on the CLP update in \textbf{Step 3} of
Part~A. In this paper, we study three CLP update schemes that share
the same components in Fig.~\ref{fig:iclsec-detail} except the CLP updater module. This
part focuses on the ``Confirm CLP updater'' (this paper's ultimate
proposed scheme). Part~E will elaborate on the other two schemes.

\vspace{1mm}
\noindent\textbf{1) Specification of Confirm CLP Updater}
\vspace{1mm}

As shown in Fig.~\ref{fig:iclsec-detail}, the CLP updater refines
$p ^ { (i) } (v_{ k } )$
using the masked text sequence
$\hat {\mathbf{s}}_{ \mathrm{m} } ^{ (i) }$,
MLM-produced word distribution
$\theta_{ n } ^{ (i) }(\mathbf{w})$,
and the intrinsic information of BCJR
$q ^ { (i) } (v_{ k } )$.
The Confirm CLP updater follows the principle of
\textbf{successive confirmation}. Once a word is unmasked, it is
confirmed and remain unmasked in all subsequent iterations. This is
achieved by setting
the bit priors of unmasked words to $0/1$ value (as elaborated below),
forcing the subsequent BCJR decoding to assign APP with probability one to the
unmasked words so that their probabilities will meet the threshold $ T $
 and these words remain unmasked in the next iteration.

\textbf{Update of}
$p ^ { (i) } (v_{ k } )$
\textbf{for Unmasked Words:}
Specifically, suppose that the word at position \emph{n} is unmasked
(i.e.,
$n\in\mathcal{U} ^ { (i) }$)
and
$(\hat{v}_k^{(i)})_{k\in\mathcal{B}_n}$
are the HD bits of the unmasked words. Then we set that
\begin{equation}
p^{(i+1)}(v_k=v)=\begin{cases}1,&v=\hat{v}_k^{(i)},\\0,&v\ne\hat{v}_k^{(i)},\end{cases}\;\; k\in\mathcal{B}_n,\ n\in\mathcal{U}^{(i)}.
\label{eq:15}
\end{equation}
This assignment in \eqref{eq:15} ensures consistent
reliability interpretation between the BCJR decoder and the MLM, as
elaborated in Remark~\ref{re:consisent}.

\textbf{Update of}
$p ^ { (i) } (v_{ k } )$
\textbf{for Masked Words:}
Consider a particular bit position \emph{k}. With reference to
Fig.~\ref{fig:iclsec-detail}, BCJR produces the APP
$r ^ { (i) } (v_{ k } )$.
This APP is a complete probability, since it already includes the
incoming prior
$p ^ { (i) } (v_{ k } )$,
as in \eqref{eq:9}. Therefore, the information passed from
the BCJR decoder to the CLP updater should remove the contribution of
this incoming prior to avoid double dips. Following this principle, we
leverage the BCJR intrinsic
$q ^ { (i) } (v_{ k } )$
rather than
$r ^ { (i) } (v_{ k } )$
to update the CLP.

Specifically, we combine
$q ^ { (i) } (v_{ k } )$
with the MLM likelihood
$\theta_{ n } ^{ (i) }(\mathbf{w})$
to generate the next bit prior for the masked words. For a target bit
$k\in\mathcal{B}_{ n }$
within a masked word
$n\in \overline{\mathcal{U}} ^ { (i) }$,
we first form a candidate likelihood using all other bits' information in the same
word:
\begin{equation}
\phi_{n,\backslash k}^{(i)}(\mathbf{w})= \prod_{\substack{\ell \in \mathcal{B}_n\\ \ell \ne k}}
q^{(i)}\big(v_\ell=b_\ell(\mathbf{w})\big),\; \mathbf{w}\in\mathcal{C}_n.
\label{eq:16}
\end{equation}
The exclusion of the target bit
$k$
ensures that the evidence carried by
$q ^ { (i) } (v_{ k } )$
is not fed back into the prior of the same bit, so that the same
evidence is not counted twice. The word-level posterior used to update
bit
$k$
is obtained by combining the MLM likelihood and the BCJR likelihood:
\begin{equation}
\rho_{n,\backslash k} ^{ (i) }\left ( { \mathbf{w} } \right )=\mathsf{N}_{ \mathbf{w},\mathcal{C}_{ n } } \Big ( { \theta_{ n } ^{ (i) }(\mathbf{w})\cdot\phi_{n,\backslash k} ^{ (i) }(\mathbf{w}) } \Big ),\;\mathbf{w}\in\mathcal{C}_{ n } .
\label{eq:17}
\end{equation}
Finally, the updated CLP for the bit
$k$
covered by a masked word is obtained by marginalizing over all candidate
words whose bit value at position
$k$
equals
$v\in\mathbb{F}_{ 2 }$:
\begin{align}
&p ^ { (i+1) } \left ( { v_{ k } =v } \right )
\nonumber
\\
&=\sum \limits_{ \mathbf{w}\in\mathcal{C}_{ n } } { \rho_{n,\backslash k} ^{ (i) }\left ( { \mathbf{w} } \right )\mathbb{I}\left ( { b_{ k } (\mathbf{w})=v } \right ) },\; k\in\mathcal{B}_{ n } ,n\in \overline{\mathcal{U}} ^ { (i) } ,
\label{eq:18}
\end{align}
where
$\mathbb{I}(\cdot)$
is the indicator function. The resulting natural-order priors
$p ^ { (i+1) } (v_{ k } )$
are then interleaved by
$\pi$
to obtain
$p ^ { (i+1) } (u_{ k } )$,
which serves as the bit prior for the BCJR decoding in the next
iteration.

\begin{remark}
The reader may wonder why
$q ^ { (i) } \left ( { v_{ \ell } } \right ),\,\ell \ne k$,
in \eqref{eq:16}, 
can be treated as extrinsic information for bit
$k$ by the BCJR decoder,
even though
$q ^ { (i) } \left ( { v_{ \ell } } \right )$
are actually intrinsic information produced by the BCJR decoder. The
rationale is as follows. Because of the interleaver, the bits associated
with the same word are not necessarily adjacent in the convolutional
codeword. \textbf{Exact vs. Approximate:} If the interleaver is designed
so that any two bits belonging to the same word are separated by at
least $C$ bit positions, where $C$ is the constraint length of
the convolutional code, then
$\phi_{n,\backslash k} ^{ (i) }\left ( { \mathbf{w} } \right )$ in
\eqref{eq:16} is strictly extrinsic information. This is because, under such an
interleaving constraint, the bits of the same word do not share local
dependencies induced by the convolutional-code trellis. Otherwise,
treating it as such is an approximation. \textbf{Design Choice:} To
maintain simplicity, simulations in Section~\ref{simulation-results} uses an unrestricted interleaver design.
We adopt this to demonstrate that very good performance can be obtained
despite the approximation, suggesting that this approach can also be
applied to LDPC codes, for which there is no direct notion of a
constraint length.
\end{remark}

\begin{remark}
The reader is reminded that the computation of
$p ^ { (i+1) } \left ( { v_{ k } } \right )$
in \eqref{eq:16}--\eqref{eq:18} applies to masks only. In
the mask-unmask word sequence
$\hat {\mathbf{s}}_{ \mathrm{m} } ^{ (i) }$
presented to the MLM, no information is given by the BCJR decoder on the
masked words. Specifically, the information contained in
$q ^ { (i) } \left ( { v_{ \ell } } \right )$ with $\ell \ne k$,
in \eqref{eq:16}, has not been used twice by the semantic
decoder consisting of the MLM and the CLP Updater.
\end{remark}

\vspace{1mm}
\noindent\textbf{2) Properties of Confirm Scheme}
\vspace{1mm}


\begin{proposition}[Non-Oscillatory Confirmation]\label{prop1}
In the Confirm scheme, once a word becomes unmasked and is confirmed at iteration
$i$,
it remains unmasked in all subsequent iterations, and all its bits
retain the same values. Specifically, for any confirmed word
$\hat {\mathbf{w}}_{ n } ^{ (i) }$ at
iteration \emph{i} with bit values
$(\hat{v}_k^{(i)})_{k\in\mathcal{B}_n}$,
we have
$\hat {\mathbf{w}}_{ n } ^{ (j) }= \hat {\mathbf{w}}_{ n } ^{ (i) }$,
$\hat{v}_k^{(j)}=\hat{v}_k^{(i)}$,
and
$\mu_{ n } ^{ (j) }( \hat {\mathbf{w}}_{ n } ^{ (j) })=1$ for all iterations $j > i$.
Thus, a confirmed word cannot oscillate between the unmasked and masked
states.
\end{proposition}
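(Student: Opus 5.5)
We argue by induction on $j>i$, with the invariant that for every $j>i$ the prior $p^{(j)}(v_k)$ is the point mass on $\hat v_k^{(i)}$ for all $k\in\mathcal{B}_n$. The base case $j=i+1$ is exactly the Confirm assignment \eqref{eq:15} applied at iteration $i$, since $n\in\mathcal{U}^{(i)}$.

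For the inductive step, assume the invariant holds at iteration $j$. First, by \eqref{eq:9}, $r^{(j)}(v_k=v)\propto p^{(j)}(v_k=v)\,q^{(j)}(v_k=v)$. When $v\neq\hat v_k^{(i)}$ the product is zero, so after normalization $r^{(j)}(v_k=\hat v_k^{(i)})=1$, provided $q^{(j)}(v_k=\hat v_k^{(i)})>0$.

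Second, in \eqref{eq:11} the unnormalized product $\prod_{\ell\in\mathcal{B}_n} r^{(j)}(v_\ell=b_\ell(\mathbf{w}))$ equals one for $\mathbf{w}=\hat{\mathbf{w}}_n^{(i)}$. It equals zero for every other candidate $\mathbf{w}\in\mathcal{C}_n$, because any other candidate differs from $\hat{\mathbf{w}}_n^{(i)}$ in at least one ASCII bit. Also, $\hat{\mathbf{w}}_n^{(i)}\in\mathcal{C}_n$, since it was chosen by the argmax in \eqref{eq:12}. Hence $\mu_n^{(j)}(\hat{\mathbf{w}}_n^{(i)})=1$ and $\hat{\mathbf{w}}_n^{(j)}=\hat{\mathbf{w}}_n^{(i)}$, which gives $\hat v_k^{(j)}=\hat v_k^{(i)}$.

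Third, $\Xi_n^{(j)}=1\ge T$ for any $T\in[0,1]$, so $n\in\mathcal{U}^{(j)}$. Rule \eqref{eq:15} then sets $p^{(j+1)}$ to the point mass on $\hat v_k^{(j)}=\hat v_k^{(i)}$, which closes the induction. Non-oscillation follows immediately.

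The main obstacle is the positivity of $q^{(j)}(v_k=\hat v_k^{(i)})$, which is needed so that the normalization in \eqref{eq:9} is well defined and does not give $0/0$. We would argue as follows. Under AWGN every channel likelihood $\mathbb{P}(\mathbf{y}_k\mid\mathbf{x}_k)$ is strictly positive. The remaining risk is that deterministic priors on other bits could force $q^{(j)}$ to vanish. This can only happen if no codeword path is consistent with all confirmed bits simultaneously. But the convolutional code places no constraint on the information bits, so any assignment of the confirmed bits extends to valid trellis paths. Every branch metric $\gamma$ along such a path is positive, so the extrinsic sum defining $q^{(j)}(v_k=\hat v_k^{(i)})$, computed from $\alpha$, $\beta$ and the channel term excluding the bit's own prior, is strictly positive.

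If a rigorous treatment of this point is deemed too detailed, an alternative is to adopt the convention that the normalization in \eqref{eq:9} assigns all mass to the support of $p^{(j)}$. A final detail is that the rounding in the normalization step must preserve exact zeros. In the idealized exact-arithmetic model this is automatic.
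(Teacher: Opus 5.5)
Your proof is correct and is essentially the paper's argument. The paper also inducts by reducing to $j=i+1$. It uses the zero prior to annihilate every trellis branch inconsistent with $\hat v_k^{(i)}$ via \eqref{eq:8}, \eqref{eq:5}, \eqref{eq:1}, which is equivalent to your use of \eqref{eq:9}. It then shows that \eqref{eq:11} concentrates all mass on $\hat{\mathbf{w}}_n^{(i)}$, giving $\Xi_n^{(i+1)}=1\ge T$.

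Your positivity check on $q^{(j)}$ handles a well-definedness point that the paper leaves implicit. To make it airtight, note that masked-bit priors from \eqref{eq:18} can also be exactly zero, since e.g. the ASCII high bit is $0$ for every candidate in $\mathcal{C}_n$. So the path you extend should assign each unconfirmed bit a value of positive prior, and such a value always exists.
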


\begin{IEEEproof}
By induction, it suffices to prove the proposition for
$j=i+1$.
Consider a particular confirmed word $n\in\mathcal{U}^{(i)}$ at iteration $i$. For
iteration
$i+1$,
let
$\gamma_{\pi(k)}^{(i+1)}(s',s)$
be the BCJR branch metric at the natural-order bit 
$k\in\mathcal{B}_n$
with an associated input bit
$u_{\pi(k)}=v_k$.
With reference to \eqref{eq:8}, the input-bit prior is a
multiplicative factor in the branch metric, and thus every branch
inconsistent with the HD value has zero metric
$\gamma_{\pi(k)}^{(i+1)}(s',s)=0$,
whenever the input bit
$u_{\pi(k)}=v_k\ne\hat{v}_k^{(i)},\, k\in\mathcal{B}_n$.
According to \eqref{eq:5} and \eqref{eq:1}, for every
$k\in\mathcal{B}_n$,
we therefore have
\begin{equation}
r^{(i+1)}(v_k=v)=\begin{cases}1,&v=\hat{v}_k^{(i)},\\0,&v\ne\hat{v}_k^{(i)},\end{cases}\;\;v\in\mathbb{F}_2.
\label{eq:19}
\end{equation}

Substituting these deterministic APPs into the word-candidate scoring in
\eqref{eq:11}, the candidate
$\hat {\mathbf{w}}_{ n } ^{ (i) }$
is the only one whose bit-wise probability factors are all non-zero; 
every other candidate in
$\mathcal{C}_{ n }$
differs from
$\hat {\mathbf{w}}_{ n } ^{ (i) }$
in at least one bit and hence receives zero score. Therefore, we have
\begin{equation}
\mu_n^{(i+1)}(\mathbf{w})=\begin{cases}1,&\mathbf{w}=\hat{\mathbf{w}}_n^{(i)},\\0,&\mathbf{w}\ne\hat{\mathbf{w}}_n^{(i)},\end{cases}\;\; \mathbf{w}\in\mathcal{C}_n.
\label{eq:20}
\end{equation}
With \eqref{eq:12}, we arrive at
$\hat {\mathbf{w}}_{ n } ^{ (i+1) }= \hat {\mathbf{w}}_{ n } ^{ (i) }$ and $\Xi_{ n } ^{ (i+1) }=1\ge T$.
Thus, by the HD-and-masking rule, word
$n$
remains unmasked in all subsequent iterations, and all its covered bits
remain the same values.
\end{IEEEproof}

\begin{corollary}[Non-Decreasing Confirmation]
In the Confirm
scheme, the confirmed-word set is non-decreasing, i.e.,
\begin{equation}
\mathcal{U}_{ i-1 } \subseteq \mathcal{U}_{ i } , \;i=1,\ldots,I.
\label{eq:21}
\end{equation}
\end{corollary}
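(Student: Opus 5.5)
The corollary follows from Proposition~\ref{prop1}. I read $\mathcal{U}_i$ as the confirmed (unmasked) set $\mathcal{U}^{(i)}$ defined in Part~B, with $\mathcal{U}^{(0)}=\emptyset$.

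The plan is to argue elementwise. Fix $i\in\{1,\ldots,I\}$ and take any $n\in\mathcal{U}^{(i-1)}$; I will show that $n\in\mathcal{U}^{(i)}$. The argument splits into two cases.

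\textbf{Base case $i=1$.} Here $\mathcal{U}^{(0)}=\emptyset$ by the initialization, so $\mathcal{U}^{(0)}\subseteq\mathcal{U}^{(1)}$ holds trivially.

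\textbf{Case $i\ge 2$.} Here $n\in\mathcal{U}^{(i-1)}$ means that word $n$ was unmasked at iteration $i-1$. Under the Confirm rule~\eqref{eq:15}, its bits therefore receive deterministic priors $p^{(i)}(v_k=\hat v_k^{(i-1)})=1$ for all $k\in\mathcal{B}_n$. Proposition~\ref{prop1}, applied with the iteration index $i-1$ and $j=i$, then gives $\mu_n^{(i)}(\hat{\mathbf w}_n^{(i)})=1$ and hence $\Xi_n^{(i)}=1$. Since $T\in[0,1]$, we have $\Xi_n^{(i)}\ge T$, so the masking rule leaves word $n$ unmasked at iteration $i$, i.e., $n\in\mathcal{U}^{(i)}$. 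Because $n$ was arbitrary, $\mathcal{U}^{(i-1)}\subseteq\mathcal{U}^{(i)}$.

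No step here is a real obstacle; the substance is entirely in Proposition~\ref{prop1}. Two details should be stated explicitly. First, the threshold satisfies $T\le 1$, so a confidence of exactly $1$ always passes the test $\Xi_n^{(i)}\ge T$. Second, the inclusion is only claimed for $i$ up to $I$, i.e., within the iterations that are actually executed; the final output step, which unmasks every word, does not affect it. I would add one side remark: chaining these inclusions gives monotonicity over all pairs $i\le j$.
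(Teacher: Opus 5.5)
Your proof is correct and takes the same route as the paper, which simply states that the corollary is implied by Proposition~\ref{prop1}. You spell that implication out elementwise, and you correctly make explicit the trivial $i=1$ case (via $\mathcal{U}^{(0)}=\emptyset$) and the fact that $T\le 1$, so that $\Xi_n^{(i)}=1$ always passes the threshold.
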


\begin{IEEEproof}
Implied by Proposition~\ref{prop1}.
\end{IEEEproof}

\begin{remark}[Consistent Reliability Interpretation]\label{re:consisent}
In the
Confirm scheme, an unmasked word has the same reliability meaning for
both the BCJR decoder and the MLM-empowered semantic decoder. Once the
HD-and-masking component in Fig.~\ref{fig:iclsec-detail} declares a word to be unmasked, the
word is treated as fully reliable by the semantic decoder as well as the
BCJR decoder in future iterations.
\end{remark}

\begin{remark}[Analogy with Successive Interference Cancellation]\label{re:analogy}
The successive-confirmation mechanism is analogous to successive
interference cancellation (SIC) decoding for overlapped signals in
wireless networks. In SIC, once a constituent signal within a
superposition of signals is decoded with sufficiently high confidence,
it is fixed and subtracted from the received signal, facilitating the
decoding of the remaining signals. As more signals are successively
decoded and canceled, fewer unresolved signals remain, until eventually
all signals are decoded. Similarly, in our successive confirmation, once
certain words are deemed reliable, they are fixed and used as semantic
context to help recover the remaining masked words. As confirmations
proceed, the number of unmasked words increases, until all words are
unmasked in the final iteration.
\end{remark}

\subsection{Naive and Hold CLP Update}\label{naive-and-hold-clp-update}

We next introduce two additional CLP update schemes, referred to as
\textbf{Naive} and \textbf{Hold}. Our investigation in this paper
originally began with the Naive scheme. The Hold scheme was subsequently
devised to mitigate the oscillatory behavior observed in Naive for all
text messages. Although Hold removes this behavior in a broad sense,
oscillations may still occur for specific text messages. The Confirm
scheme was ultimately developed as a consistent iterative scheme with
provably non-oscillatory behavior.

Confirm, Naive, and Hold schemes, share the same components in Fig.~\ref{fig:iclsec-detail} except the
CLP updater module. Therefore, we focus the Naive and Hold CLP updater
here.

Before proceeding, we note that the MLM, e.g., mmBERT \cite{Marone2025mmBERT}, produces a
probability distribution
$\theta_{ n } ^{ (i) }(\mathbf{w})$ over
candidate words at every word position, including both masked and unmasked
positions. The Confirm scheme, however, does not use the distribution at
unmasked positions. The Naive scheme, in contrast, also makes use of the
unmasked-position probability distribution, while the Hold scheme does not exploit such information like Confirm scheme. 

\textbf{Naive Scheme:} In this scheme, the MLM-empowered CLP updater
refines the priors of all words at every iteration, regardless of
whether they are masked or unmasked. Specifically, for each word
$n\in\mathcal{N}$
and each its covered bit
$k\in\mathcal{B}_{ n }$,
the prior is updated using the same marginalization rule as in
\eqref{eq:18}:
\begin{align}
&p ^ { (i+1) } \left ( { v_{ k } =v } \right )
\nonumber \\
&=\sum \limits_{ \mathbf{w}\in\mathcal{C}_{ n } } { \rho_{n,\backslash k} ^{ (i) }\left ( { \mathbf{w} } \right )\mathbb{I}\left ( { b_{ k } (\mathbf{w})=v } \right ) }, \;k\in\mathcal{B}_{ n } ,n\in\mathcal{N},
\label{eq:22}
\end{align}
where
$\rho_{n,\backslash k}^{(i)}(\mathbf{w})$
used in Naive is defined in the same way as in
\eqref{eq:17}. The key difference from Confirm is
that the Naive scheme also updates the priors of unmasked words instead
of fixing the associated bit priors as probability one.

This design leads to a key shortcoming. Recall that we combine the MLM
likelihood and the BCJR likelihood in \eqref{eq:17}. For a
masked word
$n$,
the MLM likelihood is inferred from the surrounding context of the word
(i.e., words surrounding word
$n$).
However, for an unmasked word $n$, the MLM can observe the decoded word
$\hat {\mathbf{w}}_{ n } ^{ (i) }$
itself. As a result, the MLM likelihood is no longer inferred only from
surrounding context; it also reuses information from word
$n$.
In \eqref{eq:17}, the information from the unmasked word
$n$
also contributes to
$\phi_{n,\backslash k}^{(i)}(\mathbf{w})$.
Therefore, the prior update in Naive creates a feedback loop
and double-counts the same evidence within each unmasked word.

This loop can perturb words that were already deemed reliable.
Consequently, an unmasked word may become masked again in a later
iteration, leading to oscillation and instability (see the results in
Section~\ref{iteration-performance}). In particular, the Naive scheme does not satisfy
Proposition~\ref{prop1}. Neither does it have the consistent reliability
interpretation mentioned in Remark~\ref{re:consisent}. Since an unmasked word is not
interpreted using bit priors with probability one, the reliability
interpretation is inconsistent between the MLM and BCJR: the MLM sees an
unmasked word as truly trusted hard context, whereas BCJR still receives
only soft priors.

\begin{remark}
We refer to this scheme as Naive because it applies
the sum-product belief-propagation principle in a broad but overly
simplistic manner. Specifically, the sum-product operation in
\eqref{eq:22} is applied to all word positions without
accounting for the particular structure of the scheme in Fig.~\ref{fig:iclsec-detail}. In
particular, the input to the MLM is a sequence of unmasked words and
masks, rather than a probability distribution over candidate words at
each position. This differs from a standard belief-propagation network,
in which all beliefs propagated between nodes are probabilistic beliefs.
\end{remark}

\textbf{Hold Scheme:} Unlike the Naive scheme, the Hold scheme does not
refine the priors of unmasked words for the next iteration.
Specifically, if word
$n\in\mathcal{U} ^ { (i) }$
is unmasked at iteration
$i$,
then for each
$k\in\mathcal{B}_{ n }$,
the prior is held:
\begin{equation}
p ^ { (i+1) } \left ( { v_{ k } =v } \right )=p ^ { (i) } \left ( { v_{ k } =v } \right ),\;v\in\mathbb{F}_{ 2 } ,k\in\mathcal{B}_{ n } , n\in\mathcal{U} ^ { (i) } .
\label{eq:23}
\end{equation}
For masked words, the Hold scheme uses the same CLP-update procedure as
in \eqref{eq:16}--\eqref{eq:18}.

The Hold scheme avoids the direct double-counting issue of Naive
because it does not directly refine the priors for the unmasked
words. However, holding a soft prior is still fundamentally different
from confirming a word with probability one. An unmasked word in the
Hold scheme may become masked in a later iteration even though its prior
was held fixed after the iteration in which it was unmasked. This is
because the BCJR APP of that word depends not only on its own prior, but
also on the forward and backward beliefs propagated through the
trellis. When the priors of other words change, the information entering
the current word also changes. Thus, a previously unmasked word can fail
the masking threshold later.

Once such a word becomes masked again, its prior is no longer held and
is renewed again by the CLP updater in \eqref{eq:16}--\eqref{eq:18}. Therefore, the Hold scheme may
still exhibit unmask-to-mask transitions across iterations, which is
verified by our simulations in Section~\ref{iteration-performance}. In other words, it does not
provide the non-oscillatory guarantee in Proposition~\ref{prop1}. The Hold scheme
also does not satisfy the consistency property mentioned in Remark~\ref{re:consisent}.
When a word is unmasked, the MLM observes it as a hard, fully trusted
word. However, BCJR only receives the previously held soft prior rather
than a deterministic prior with probability one. Therefore, the same
unmasked word has different reliability meanings in the two modules.

\section{Simulation Results}\label{simulation-results}

\subsection{Experimental Setups}\label{experimental-setups}

\textbf{Model and Dataset:} We employ mmBERT
\cite{Marone2025mmBERT} as the MLM for the CLP update. We use the
50 English passages \cite{Wang2026CLSEC} as
the transmitted messages. For each tested SNR, the transmission of each
passage is simulated over 100 Monte-Carlo trials with independently
generated channel noise, yielding 5000 received corrupted passages.

\textbf{System Settings:} We instantiate the physical-layer channel code
using a convolutional code with rate
$R=1/2$
and constraint length
$C=3$.
We apply a random interleaver $\pi$. For the iterative schemes, the maximum
number of iterations is set to
$I_{ \mathrm{max} } =10$,
and the prior-difference stopping threshold is set to
$\epsilon=10 ^ { -3 }$.
For the masking probability threshold
$T$,
we adopt an operating-point-dependent setting. Specifically,
$T$
is calibrated offline as a function of the operating SNR and stored in a
lookup table, analogous to the standard adaptive modulation and coding
(AMC) practice in modern receivers, where transmission parameters are
selected according to the estimated channel quality. This incurs no
additional online cost, since SNR estimation is routinely available at
the receiver.

\begin{figure*}[!t]
\centering
\includegraphics[width=0.85\linewidth]{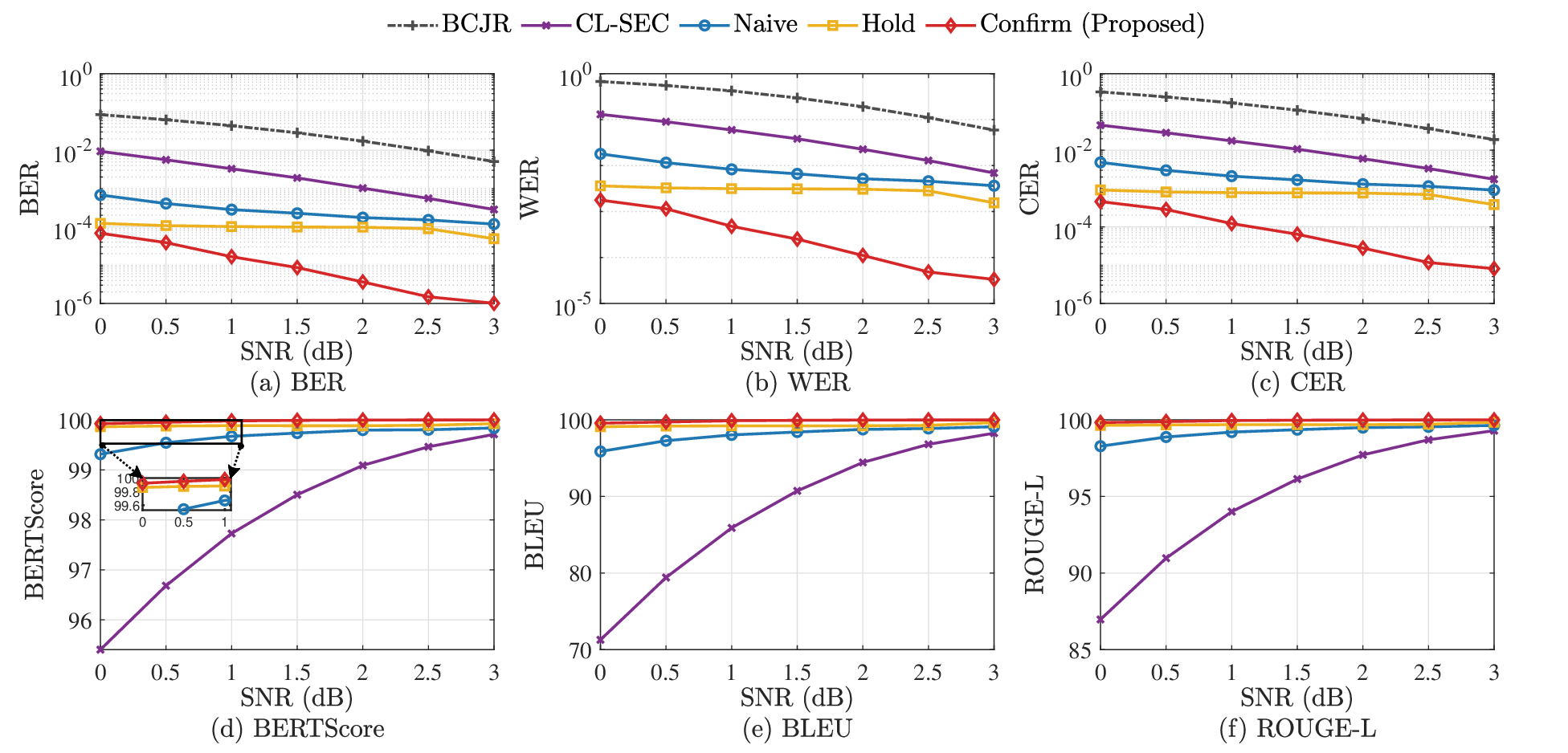}
\caption{Performance comparison with varying SNR. The first row reports (a) BER, (b) WER, and (c) CER; the second row reports (d) BERTScore, (e) BLEU, and (f) ROUGE-L.}
\label{fig:snr-performance}
\end{figure*}

\textbf{Investigated Schemes:} We compare five schemes. The first two
are non-iterative, while the last three are iterative and differ only in
the CLP prior-update rule. Let
$\{ \hat {\mathbf{w}}_{ n } ^{ (i) }\}_{ n\in\mathcal{N} }$
be the decoded words and
$\{ \hat v_{ k } ^{ (i) }\}_{ k\in\mathcal{K} }$
be the associated HD bits across all schemes, where the iteration index
$i=1$
for the non-iterative schemes and
$i=1,\ldots,I\le I_{\max}$
for the iterative schemes. The determination of
$\{ \hat {\mathbf{w}}_{ n } ^{ (i) }\}_{ n\in\mathcal{N} }$
and
$\{ \hat v_{ k } ^{ (i) }\}_{ k\in\mathcal{K} }$
in iterative schemes has been elaborated in Section~\ref{bcjr-intrinsic-information-hard-decision-and-masking}.

\textbf{1) BCJR Decoding:} This is the conventional channel decoding
baseline. The BCJR decoding is run once with uniform bit priors to
obtain bit-level APPs
$r ^ { (1) } (v_{ k } ),\,k\in\mathcal{K}$.
Then the conventional bit-wise HD is performed:
\begin{equation}
\hat v_{ k } ^{ (1) }=\operatorname*{arg\,max} \limits_{ v\in\mathbb{F}_{ 2 } }\, r ^ { (1) } \left ( { v_{ k } =v } \right ).
\label{eq:24}
\end{equation}
Converting
$\{\hat{v}_k^{(1)}\}_{k\in\mathcal{B}_n}$
for each
$n\in\mathcal{N}$
to characters yields decoded words
$\{ \hat {\mathbf{w}}_{ n } ^{ (1) }\}_{ n\in\mathcal{N} }$.
This pure physical-layer scheme does not use any semantic information.
The bit sequence in
$\hat {\mathbf{w}}_{ n } ^{ (1) }$
may not even be a valid word in the vocabulary.

\textbf{2) CL-SEC Scheme:} This is the non-iterative baseline from
\cite{Wang2026CLSEC}. The BCJR decoding is
first run once with uniform priors to obtain bit-level APPs
$r ^ { (1) } (v_{ k } ),\,k\in\mathcal{K}$.
Then the word-wise HD and masking in Section~\ref{bcjr-intrinsic-information-hard-decision-and-masking} are performed,
producing
$\{ \hat {\mathbf{w}}_{ n } ^{ (1) }\}_{ n\in\mathcal{N} }$
and
$\overline{\mathcal{U}} ^ { (1) }$.
To replace each unreliable decoded word that is masked, CL-SEC
combines the physical-layer word likelihood obtained from BCJR
probabilities with the MLM likelihood acquired from semantic context.
That is,
\begin{align}
  \delta_n^{(1)}(\mathbf{w})&=\prod_{k\in\mathcal{B}_n}r^{(1)}(v_k=b_k(\mathbf{w}))\cdot\theta_n^{(1)}(\mathbf{w}),\; n\in\overline{\mathcal{U}}^{(1)}, \\ 
  \hat{\mathbf{w}}_n^{(1)}&\leftarrow\underset{\mathbf{w}\in\mathcal{C}_n}{\arg\max}\ \delta_n^{(1)}(\mathbf{w}),\; n\in\overline{\mathcal{U}}^{(1)}.
\end{align}\label{eq:25}
Mapping $\{ \hat {\mathbf{w}}_{ n } ^{ (1) }\}_{ n\in\mathcal{N} }$
to ASCII bits
yields $\{ \hat v_{ k } ^{ (1) }\}_{ k\in\mathcal{K} }$.
Unlike ICL-SEC, this scheme performs the cross-layer correction only
once and does not generate prior to feed back to BCJR decoder.

\textbf{3) Naive Scheme:} This scheme has been detailed in Section~\ref{naive-and-hold-clp-update}.
At each iteration, it updates the priors of both masked and unmasked
words using the MLM-empowered CLP updater, as in
\eqref{eq:22}.

\textbf{4) Hold Scheme:} This scheme has also been introduced in Section~\ref{naive-and-hold-clp-update}. For words that are unmasked at an iteration, the scheme keeps
their priors unchanged in the next iteration (see
\eqref{eq:23}); for masked words, it updates the priors
using the MLM-empowered CLP updater (see \eqref{eq:16}--\eqref{eq:18}).

\textbf{5) Confirm Scheme (Proposed):} The Confirm scheme is the proposed
ICL-SEC scheme with successive confirmation (see Section~\ref{confirm-clp-update-ultimate-proposed-scheme}). Once a
word is unmasked with sufficient reliability, it is confirmed and its
corresponding bit priors are fixed to probability one in all
subsequent iterations, as in \eqref{eq:15}. Bit priors for
the masked words are updated using the MLM-empowered CLP updater, as
in \eqref{eq:16}--\eqref{eq:18}.

\textbf{Evaluation Metrics:} We evaluate both transmission reliability
and semantic fidelity. For transmission reliability, we report bit error
rate (BER), word error rate (WER), and character error rate (CER). For
semantic fidelity, we report BERTScore \cite{Zhang*2020BERTScore},
BLEU \cite{Papineni2001BLEU}, and ROUGE-L \cite{Lin2004ROUGE}.
BERTScore measures semantic similarity through contextual token
embeddings. BLEU measures
$n$-gram
overlap between the recovered and true texts. ROUGE-L measures
sequence-level overlap based on the longest common subsequence.

The performance metrics are computed as follows for each
$i$. 
First, BER is computed by comparing the bit stream
$( \hat v_{ k } ^{ (i) })_{ k\in\mathcal{K} }$
with its ground truth
$(v_{ k })_{ k\in\mathcal{K} }$.
Second, we obtain WER by comparing
$\{ \hat {\mathbf{w}}_{ n } ^{ (i) }\}_{ n\in\mathcal{N} }$
with
$\{{\mathbf{w}}_n\}_{ n\in\mathcal{N} }$;
a word is counted as correct only if all its letters are recovered
correctly. Third, CER is evaluated by comparing the character string
$( \hat {\mathbf{w}}_{ n } ^{ (i) })_{ n\in\mathcal{N} }$
with
$( {\mathbf{w}}_n)_{ n\in\mathcal{N} }$.
Finally, we obtain the recovered spaced text 
$\hat {\mathbf{s}} ^ { (i) }$
by inserting spaces between adjacent words in
$\{ \hat {\mathbf{w}}_{ n } ^{ (i) }\}_{ n\in\mathcal{N} }$, and input
$\hat {\mathbf{s}} ^ { (i) }$
and its ground-truth counterpart to each semantic scorer for the
semantic-fidelity computation. Since the BCJR scheme does not use any
semantic information, its severely poor semantic-fidelity results are
omitted in the figures for clarity.

\subsection{Varying-SNR Performance}\label{varying-snr-performance}

Fig.~\ref{fig:snr-performance} compares the performance of the investigated
schemes over different SNRs. The results of iterative schemes are
collected from the final iteration. Across all six metrics, the Confirm
ICL-SEC scheme consistently achieves the best performance. Importantly,
Confirm ICL-SEC achieves a substantial improvement over the
non-iterative CL-SEC. For example, at 1.5\,dB SNR, the BER of
non-iterative CL-SEC is
$1.90\times10 ^ { -3 }$,
while Confirm ICL-SEC reduces it to
$8.64\times10 ^ { -6 }$,
constituting a BER reduction by a factor of
$220$;
across all considered SNRs, Confirm ICL-SEC reduces the BER by more than
two orders of magnitude compared with CL-SEC. Moreover, the gain over
the conventional BCJR is larger: BCJR has a BER of
$2.85\times10 ^ { -2 }$
at 1.5\,dB SNR, and thus Confirm scheme achieves a BER reduction by a
factor of
$3300$,
corresponding to more than three orders of magnitude.

The Confirm scheme also outperforms the other iterative approaches. For
instance, compared with the Naive and Hold schemes, Confirm
reduces the WER at 1.5\,dB SNR by factors of
$26.3$
and
$12.4$,
respectively. This indicates that iteration alone is not sufficient --
the way in which the CLP is updated is critical. Both the Naive and Hold
schemes may render the unmasked word become masked again in a later
iteration, as shown in Fig.~\ref{fig:iteration-performance} and
Fig.~\ref{fig:unmask-to-mask}. This limitation prevents them from achieving
the superior performance of the proposed successive confirmation design.

From Fig.~\ref{fig:snr-performance}, the same trend is also observed in CER, which is consistent with the BER
and WER results. Semantic-fidelity results further confirm the advantage
of Confirm ICL-SEC. At 0\,dB SNR, Confirm scheme achieves a BERTScore,
BLEU, and ROUGE-L of 99.92, 99.57, and 99.82, respectively,
demonstrating the accurate semantic recovery even at low SNR.

\begin{figure*}[!t]
\centering
\includegraphics[width=0.85\linewidth]{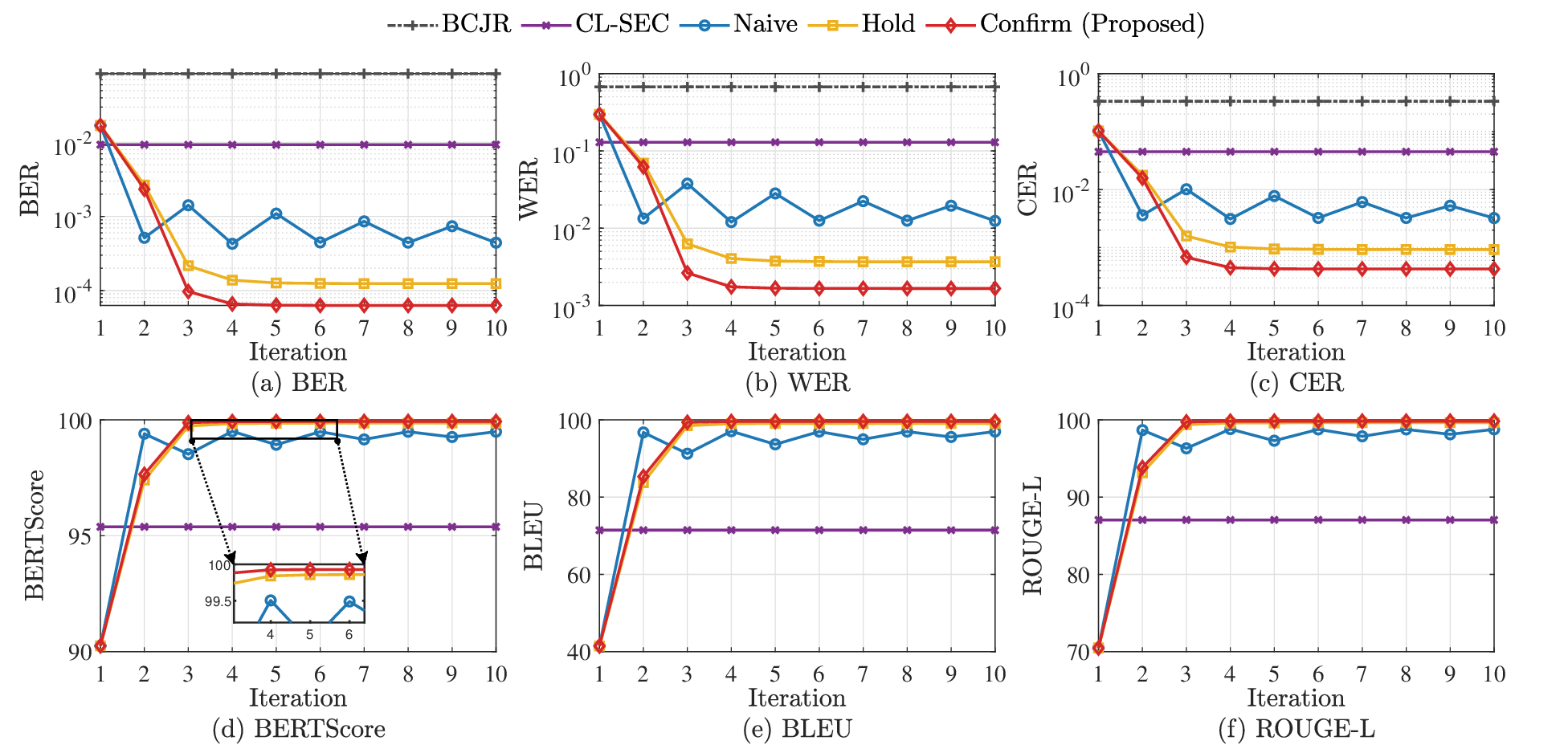}
\caption{Per-iteration performance at 0\,dB SNR, averaged over 5000 Monte-Carlo trials. The first row reports (a) BER, (b) WER, and (c) CER; the second row reports (d) BERTScore, (e) BLEU, and (f) ROUGE-L.}
\label{fig:iteration-performance}
\end{figure*}\label{fig:transitions}

\subsection{Iterative Performance}\label{iteration-performance}

Fig.~\ref{fig:iteration-performance} examines the evolution of different schemes
over iterations at 0\,dB SNR, averaged over 5000 Monte-Carlo trials. Since
BCJR and CL-SEC schemes are non-iterative, they appear as horizontal
reference lines. Confirm ICL-SEC achieves the best final performance and
improves monotonically without oscillation. Moreover, ICL-SEC exhibits a
fast convergence rate. For 0\,dB SNR, it merely requires about five
iterations on average to converge.

In contrast, the Naive scheme exhibits severe oscillatory behavior: its
error metrics can decrease in one iteration and then increase again in a
later iteration. The oscillations are persistent and uniform over the
5000 trials and manifest themselves in the results of all metrics despite the
averaging over the 5000 trials. From Fig.~\ref{fig:iteration-performance}, we see that even iterations
having better performance (e.g., BER) than odd iterations. Furthermore, the BER performance
does not improvement further after iteration 2.

The Hold scheme appears more stable after averaging over many trials,
but it still lacks the non-oscillatory guarantee of the Confirm scheme
as mentioned. From our experiments, we observe that the oscillations of
Hold vary from trial to trial and is trial-specific (see Fig.~\ref{fig:unmask-to-mask}).

\begin{figure}[!t]
\centering
\includegraphics[width=1\linewidth]{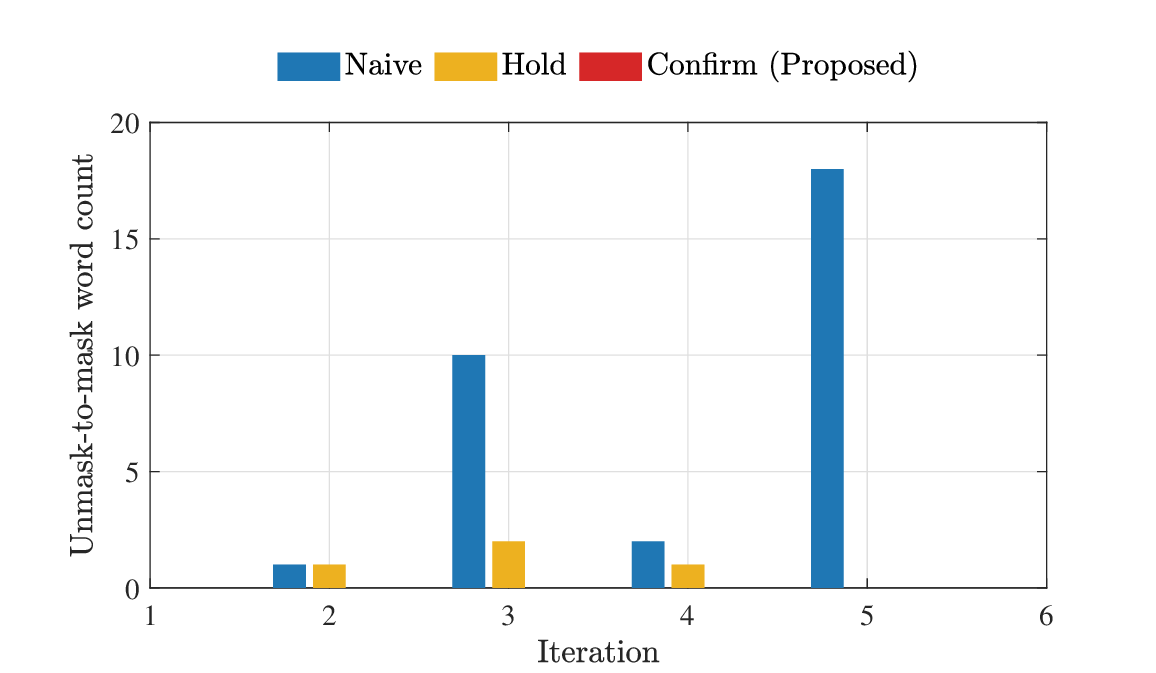}
\caption{Unmask-to-mask transitions for a single trace operated at 0\,dB SNR with $N=105$ words. }
\label{fig:unmask-to-mask}
\end{figure}

Fig.~\ref{fig:unmask-to-mask} provides a closer look at the unmask-to-mask
transitions for a single trace at 0\,dB SNR. Specifically, at iteration
$i\ge2$,
we count the number of unmasked words at iteration
$i-1$
which revert to masked words at iteration
$i$.
The transmitted message contains
$N=105$
words. Notably, the Confirm scheme has zero unmask-to-mask transitions.
This result directly validates Proposition~\ref{prop1}: once a word is confirmed,
its associated bit priors are fixed to probability one, and hence the word
cannot become masked again in later iterations. In contrast, the Naive
scheme exhibits frequent decision reversals, with 31 unmask-to-mask
transitions in this trace. The Hold scheme reduces but does not
eliminate such transitions, with 4 unmask-to-mask events.

\section{Conclusion}\label{conclusion}

This paper proposed \textbf{ICL-SEC}, an iterative cross-layer semantic
error correction framework that integrates physical-layer soft
information with application-layer language-model-driven contextual inference through
repeated feedback. Motivated by the success of turbo and LDPC decoding,
the proposed approach treats semantic recovery not as a one-shot
post-processing task, but as an iterative refinement process in which
the channel decoder and the semantic decoder exchange complementary
reliability information.

The central principle of ICL-SEC is \textbf{successive confirmation}.
After each BCJR decoding pass, word-level posterior probabilities are
used to identify reliable words. These words are confirmed and then revealed to the
masked language model as trusted context, while unreliable words remain
masked. The language model generates contextual word likelihoods for the
unresolved positions, and these likelihoods are leveraged to refine the
bit-level priors for the next channel-decoding iteration. Unlike Naive
iterative updates that repeatedly revise all words, the proposed Confirm
scheme assigns deterministic priors with probability one to words once
they are declared reliable. This design gives an unmasked word the same
meaning in both layers: it is fully trusted semantic context for the
application-layer language model and fully reliable side information for
the physical-layer channel decoder.

This reliability-consistent design leads to an important structural
property: confirmed words cannot revert to masked words in later
iterations. The resulting non-oscillatory behavior distinguishes the
proposed method from the \textbf{Naive} and \textbf{Hold} variants, both
of which can suffer from instability because previously unmasked words
may later become unreliable again. The \textbf{Confirm} scheme therefore
does more than add iteration; it provides a principled prior-update rule
that stabilizes the interaction between physical-layer decoding and
semantic inference.

Simulation results demonstrate the significance of this design. Across
BER, WER, CER, BERTScore, BLEU, and ROUGE-L, ICL-SEC consistently
improves over conventional BCJR decoding, non-iterative CL-SEC, and the
alternative iterative update approaches. The proposed Confirm scheme reduces
BER by more than two orders of magnitude compared with non-iterative
CL-SEC and achieves high semantic fidelity even at low SNR. These gains
show that semantic information, when fed back in a carefully controlled
and reliability-aware manner, can substantially enhance communication
reliability.

More broadly, this work suggests a new direction for semantic
communications: semantic intelligence should not merely repair corrupted
outputs after channel decoding, but should participate actively in the
decoding loop. By aligning channel reliability and semantic confidence
through iterative cross-layer exchange, ICL-SEC provides a step toward
communication systems that are both error-resilient and meaning-aware.
Future work may extend this principle to other channel codes, larger
language models, adaptive threshold design, and non-textual modalities
such as images, speech, and multimodal messages.

\bibliographystyle{IEEEtran}
\bibliography{ICL-SEC_refs}

\end{document}